\newtheorem{theorem}{Theorem}
\newcommand{\captionfonts}{\footnotesize}
\long\def\@makecaption#1#2{%
  \vskip\abovecaptionskip
  \sbox\@tempboxa{{\captionfonts #1: #2}}%
  \ifdim \wd\@tempboxa >\hsize
    {\captionfonts #1: #2\par}
  \else
    \hbox to\hsize{\hfil\box\@tempboxa\hfil}%
  \fi
  \vskip\belowcaptionskip}
\begin{document}
\title{Identifying Quantum Structures in the Ellsberg Paradox}
\author{Diederik Aerts$^1$, Sandro Sozzo$^1$ and Jocelyn Tapia$^2$ \vspace{0.5 cm} \\ 
        \normalsize\itshape
        $^1$ Center Leo Apostel for Interdisciplinary Studies \\
        \normalsize\itshape
        and, Department of Mathematics, Brussels Free University \\ 
        \normalsize\itshape
         Krijgskundestraat 33, 1160 Brussels, Belgium \\
        \normalsize
        E-Mails: \url{diraerts@vub.ac.be,ssozzo@vub.ac.be}
          \vspace{0.5 cm} \\ 
        \normalsize\itshape
        $^2$ Pontificia Universidad Cat\'olica de Chile \\
        \normalsize\itshape
         Avda. Libertador Bernardo OHiggins 340, Santiago (Chile) \\
        \normalsize
        E-Mails: \url{jntapia@uc.cl} \\
              }
\date{}
\maketitle
\begin{abstract}
\noindent
Empirical evidence has confirmed that quantum effects 
occur frequently also outside the microscopic domain, while quantum structures satisfactorily model various situations in several areas of science, including biological, cognitive and social processes. In this paper, we elaborate a quantum mechanical model which faithfully describes the {\it Ellsberg paradox} in economics, showing that the mathematical formalism of quantum mechanics is capable to represent the {\it ambiguity} present in this kind of situations, because of the presence of {\it contextuality}. Then, we analyze the data collected in a concrete experiment we performed on the Ellsberg paradox and work out a complete representation of them in complex Hilbert space.
We prove that the presence of quantum structure is genuine, that is, 
{\it interference} and {\it superposition} in a complex Hilbert space are really necessary to describe the conceptual situation presented by Ellsberg. Moreover, our approach sheds light on `ambiguity laden' decision processes in economics and decision theory, and allows to deal with different Ellsberg-type generalizations, e.g., the {\it Machina paradox}.
\end{abstract}
\medskip
{\bf Keywords}: Ellsberg paradox, Ambiguity, Quantum structures, Complex Hilbert spaces

\section{Introduction\label{intro}}
Traditional approaches in economics follow the hypothesis that agents' behavior during a decision process is mainly ruled by {\it expected utility theory} (EUT) \cite{vonneumannmorgenstern1944,savage1954}. Roughly speaking, in presence of uncertain events, decision makers choose in such a way that they maximize their utility, or satisfaction. Notwithstanding its mathematical tractability and predictive success, the structural validity of EUT at the individual level is questionable. Indeed, systematic empirical deviations from the predictions of EUT have been observed which are usually referred to as {\it paradoxes} \cite{ellsberg1961,machina2009}.

EUT was formally elaborated by von Neumann and Morgenstern \cite{vonneumannmorgenstern1944}. They presented a set of axioms that allow to represent decision--maker preferences over the set of \emph{acts} (functions from the set of states of the nature into the set of consequences) by a suitable functional $E_p u(.)$, for some 
Bernoulli utility function $u$ on the set of consequences and an objective probability measure $p$ on the set of states of the nature. An important aspect of EUT concerns the treatment of uncertainty. Knight had pointed out the difference between \emph{risk} and \emph{uncertainty} reserving the term \emph{risk} for 
situations that can be described by known (or physical) probabilities, and the term \emph{uncertainty} to refer to situations in which agents do not know the probabilities associated with each of the possible outcomes of an act \cite{knight1921}. Von Neumann and Morgenstern modeling did not contemplate the latter possibility, since all probabilities are {\it objectively}, i.e. physically, given in their scheme. For this reason, Savage extended EUT allowing agents to construct their own subjective probabilities when physical probabilities are not available \cite{savage1954}. According to Savage's model, the distinction proposed by Knight seems however irrelevant. Ellsberg instead showed that Knightian's distinction is empirically meaningful \cite{ellsberg1961}. In particular, he presented the following experiment. Consider one urn with 30 red balls and 60 balls that are either yellow or black, the latter in unknown proportion. One ball will be drawn from the urn. Then, free of charge, a person is asked to bet on one of the acts $f_1$, $f_2$, $f_3$ and $f_4$ defined in Tab. \ref{table01}.
\noindent
\begin{table} \label{table01}
\begin{center} 
\begin{tabular}{| p{1.5cm}|p{1.5cm}|p{1.5cm}|p{1.5cm}|}
\hline
Act & red & yellow & black\\
\hline\hline
$f_1$ & 12\$ & 0\$ & 0\$   \\
\hline
$f_2$ & 0\$ & 0\$ & 12\$  \\
\hline
$f_3$ & 12\$ & 12\$ & 0\$  \\
\hline
$f_4$ & 0\$ & 12\$ & 12\$  \\
\hline
\end{tabular}
\end{center}
\caption{The payoff matrix for the Ellsberg paradox situation.}
\end{table}
\noindent
When asked to rank these gambles most of the persons choose to bet on $f_1$ over $f_2$ and $f_4$ over $f_3$. This preference cannot be explained by EUT. Indeed, individuals' ranking of the sub--acts [12 on {\it red}; 0 on {\it black}] versus [0 on {\it red}; 12 on {\it black}] depends upon whether the event $yellow$ yields a payoff of 0 or 12, contrary to what is suggested by the Sure--Thing principle, an important axiom of Savage's model.\footnote{The Sure--Thing principle was stated by Savage by introducing the {\it businessman example}, but it can be presented in an equivalent form, the {\it independence axiom}, as follows: if persons are indifferent in choosing between simple lotteries $L_1$ and $L_2$, they will also be indifferent in choosing between $L_1$ mixed with an arbitrary simple lottery $L_3$ with probability $p$ and $L_2$ mixed with $L_3$ with the same probability $p$.} Nevertheless, these choices have a direct intuition: $f_1$  offers the 12 prize with an {\it objective probability} of $1/3$, and $f_2$ offers the same prize but in an element of the {\it subjective partition}  $\{black, yellow  \}$. In the same way, $f_4$ offers the prize with an objective probability of $2/3$, whereas $f_3$ offers the same payoff on the union of the unambiguous event {\it red} and the ambiguous event {\it yellow}. Thus, in both cases the unambiguous bet is preferred to its ambiguous counterpart, a phenomenon called {\it ambiguity aversion} by Ellsberg. 

Many extensions of EUT have been worked out to cope with Ellsberg--type preferences, which mainly consist in replacing the Sure--Thing Principle by weaker axioms. We briefly summarize the most known, as follows.

(i) {\it Choquet expected utility}. 
This model considers a subjective non--additive probability (or, capacity) over the states of nature rather than a subjective probability. Thus, decision--makers could underestimate or overestimate probabilities in the Ellsberg experiment, and ambiguity aversion is equivalent to the convexity of the capacity (pessimistic beliefs) \cite{gilboa}.

(ii) {\it Max--Min expected utility}. 
The lack of knowledge about the states of nature of the decision--maker cannot be represented by a unique probability measure but, rather, by a set of probability measures. Then, an act $f$ is preferred to $g$ iff $ \min_{p\in P} E_p u(f) >  \min_{p\in P} E_p u(f)$, where $P$ is a convex and closed set of additive probability measures. Ambiguity aversion is represented by the pessimistic beliefs of the agent which takes decisions considering the worst probabilistic scenario \cite{gilboaschmeidler1989}. 

(iii) {\it Variational preferences}. In this dynamic generalization of the Max--Min expected utility, agents rank acts according to the criterion $\inf_{p\in \bigtriangleup } \{E_p u(f)+c(p)\}$, where $c(p)$ is a closed and convex penalty function associated with the probability election \cite{mmr2006}.
 
(iv) {\it Second order probabilities}. This is a model of preferences over acts where the decision--maker prefers act $f$ to act $g$ iff $E_{\mu} \phi (E_p u (f) ) > E_{\mu} \phi$ $ (E_p u (g))$, where $E$ is the expectation operator, $u$ is a von Neumann--Morgenstern utility function, $\phi$ is an increasing transformation, and $\mu$ is a subjective probability over the set of probability measures $p$ that the decision--maker thinks are feasible. Ambiguity aversion is here represented by the concavity of the transformation $\phi$ \cite{kmm2005}. 

Despite approaches (i)--(iv) have been widely used in economic and financial modeling, none of them is immune of critics \cite{machina2009,epstein1999}. And, worse, none of these models can satisfactorily represent more general Ellsberg--type situations (e.g., the {\it Machina paradox} \cite{machina2009,bdhp2011}). As a consequence, the construction of a unified perspective representing ambiguity is still an unachieved goal in economics and decision making.

We have recently inquired both conceptually and structurally into the above approaches generalizing EUT \cite{gilboa,gilboaschmeidler1989,mmr2006,kmm2005} to cope with ambiguity. The latter is defined as a situation without a probability model describing it as opposed to \emph{risk}, where a classical probability model on a $\sigma$--algebra of events is presupposed. The generalizations in (i)--(iv) consider more general structures than a single classical probability model on a $\sigma$--algebra. We are convinced that this is the point: ambiguity, due to its contextuality, structurally need a non--classical probability model. To this end we have elaborated a general framework, based on the notion of {\it contextual risk} and inspired by the probability structure of quantum mechanics, which is intrinsically different from a classical probability on a $\sigma$-algebra, the set of events is indeed {\it not} a Boolean algebra \cite{aertsczachordhooghe2011,aertsdhooghesozzo2011,aertssozzovaxjo,aertssozzokavala12}. Inspired by this approach, we work out in the present article a complete mathematical representation of the Ellsberg paradox situation (states, payoffs, acts, preferences) in the standard mathematical formalism of quantum mechanics, hence by using a complex Hilbert space, and representing the probability measures by means of projection valued measures on this complex Hilbert space \cite{ast12} (Sect. \ref{ellsberg}). 
This analysis leads us to claim that the structure of the probability models is essentially different from the ones of known approaches -- projection valued measures instead of $\sigma$--algebra valued measures \cite{gudder}. But, more important, also the way in which states are represented in quantum mechanics, i.e. by unit vectors of the Hilbert space, introduces a fundamentally different aspect, coping both mathematically and intuitively with the notion of ambiguity as introduced in economics. Successively, we analyze the experimental data we collected in a {\it statistically
relevant} experiment we performed, where real decision--makers were asked to bet on the different acts defined by the Ellsberg paradox situation \cite{aertsczachordhooghe2011,aertsdhooghesozzo2011} (Sect. \ref{experiment}). We 
show that our quantum mechanical model faithfully represents the subjects' preferences together with experimental statistics. Furthermore, we describe the choices between acts $f_1$/$f_2$ and between acts $f_3$/$f_4$ by quantum observables represented by compatible spectral families (Sect. \ref{mod_experiment}). We finally 
prove that the requirement of compatibility of the latter observables makes it necessary to introduce a Hilbert space over 
{\it complex} numbers, namely that {\it imaginary} numbers are needed since our experimental data 
cannot be modeled in a real Hilbert space, 
i.e. a vector space over {\it real} numbers only, in case we want the observables representing our experiment to be compatible (Sect. \ref{realcase}). Complex numbers in quantum theory stand for the quantum effect of interference, and indeed, we can identify in our modeling how interference produces the typical Ellsberg deviation leading to measured data in our experiment. 
These results strongly suggest that, more generally, `ambiguity laden' situations can be explained in terms of the appearance of typically quantum effects, namely, contextuality -- our quantum model is contextual, see the discussion on context in Sect. \ref{ellsberg} --, superposition -- we explicitly use superposition to construct the quantum states representing the Ellsberg bet situations in Sect. \ref{ellsberg} -- and interference -- i.e. complex numbers --, and that quantum structures have the capacity to mathematically deal with this type of situations. Hence, our findings naturally fit within the growing `quantum interaction research' which mainly applies quantum structures to cognitive situations \cite{bruzaetal2007,bruzaetal2008,bruzaetal2009,aerts2009,busemeyerlambert2009,pb2009,danilovlambert2010,k2010,songetal2011,bpft2011,bb2012}. 

\section{A quantum model in Hilbert space for the Ellsberg paradox\label{ellsberg}}
We have recently worked out a Hilbert space model for the Ellsberg paradox situation \cite{ast12}. Here we deepen our Hilbert space representation of the Ellsberg situation and derive 
new results elaborating the already obtained ones.
But we first need to anticipate a discussion on the notions of `context' and `contextual influence' and how they are used in the present framework. The notion of context typically denotes what does not pertain to the entity under study but that can interact with it. In the foundations of quantum mechanics, context more specifically indicates the `measurement context', which influences the measured quantum entity in a stochastic way. As a consequence of this contextual interaction, the state of the quantum entity changes, thus determining a transition from potential to actual. A quantum mechanical context is represented by a self-adjoint operator or, equivalently, by a spectral family. An analogous effect occurs in a decision process, where there is generally a contextual influence (of a cognitive nature) having its origin in the way the mind of the person involved in the decision, e.g., a choice between two bets, relates to the situation that is the subject of the decision making, e.g., the Ellsberg situation. This is why, in our analysis of the Ellsberg paradox, we use the definition and representation of context and contextual influence to indicate the cognitive interaction taking place between the conceptual Ellsberg situation and the human mind in a decision process. We are now ready to proceed with our quantum modeling.

To this end let us consider the situation illustrated in Tab. 1, Sect. \ref{intro}. The simplest Hilbert space that can supply a faithful modeling of the Ellsberg situation is the three dimensional complex Hilbert space $\mathbb{C}^3$ whose canonical basis we denote by $\{|1,0,0\rangle, |0,1,0\rangle, |0,0,1\rangle \}$. We introduce the model in different steps: we will see that in each of them, quantum structures enable a new and satisfying way to model an aspect of ambiguity. At the last step it will be made clear that modeling the agents' decisions at a statistical level requires the full probabilistic apparatus of quantum mechanics, that is, both states and measurements.

\subsection{The conceptual Ellsberg entity\label{step1}}
The first part of the model consists of the Ellsberg situation without considering neither the different acts nor the person nor the bet to be taken. Hence it is the situation of the urn with 30 red balls and 60 black and yellow balls in unknown proportion ({\it conceptual Ellsberg entity}). Already at this stage, the presence of ambiguity can be mathematically taken into account by means of the quantum mechanical formalism. To this aim we introduce a quantum mechanical context $e$ and represent it by means of the family $\{P_r, P_{yb}\}$, where $P_r$ is the one dimensional orthogonal projection operator on the subspace generated by the unit vector $|1,0,0\rangle$, and $P_{yb}$ is the two dimensional orthogonal projection operator on the subspace generated by the unit vectors $|0,1,0\rangle$ and $|0,0,1\rangle$. $\{P_r, P_{yb}\}$ is a spectral family, since $P_r \perp P_{yb}$ and $P_r+P_{yb}=\mathbbmss{1}$. Contexts, more specifically measurement contexts, are represented by spectral families of orthogonal projection operators (equivalently, by a self--adjoint operator determined by such a family) also in quantum mechanics. Again in analogy with quantum mechanics, we represent the states of the conceptual Ellsberg entity by means of unit vectors of ${\mathbb C}^{3}$. For example, the unit vector
\begin{equation} \label{stateredyellow}
|v_{ry}\rangle=|{1 \over \sqrt{3}}e^{i\theta_r}, \sqrt{2 \over 3}e^{i\theta_y}, 0\rangle
\end{equation}
can be used to represent a state describing the Ellsberg situation. Indeed, the probability for `red' in the state $p_{v_{ry}}$ represented by $|v_{ry}\rangle$ is
\begin{eqnarray}
|\langle 1,0,0|v_{ry}\rangle|^2=\langle v_{ry}|1,0,0\rangle\langle 1,0,0|v_{ry}\rangle=\parallel P_{r}|v_{ry}\rangle\parallel^{2}={1 \over 3}
\end{eqnarray}
Moreover, the probability for `yellow or black' in the state $p_{v_{ry}}$ represented by $|v_{ry}\rangle$ is
\begin{equation}
\parallel P_{yb}|v_{ry}\rangle \parallel^{2}=\langle 0,\sqrt{2 \over 3}e^{i\theta_y},0|0,\sqrt{2 \over 3}e^{i\theta_y},0\rangle={2 \over 3}
\end{equation}
But this is not the only state describing the Ellsberg situation. For example, the unit vector
\begin{equation} \label{stateredblack}
|v_{rb}\rangle=|{1 \over \sqrt{3}}e^{i\phi_r}, 0, \sqrt{2 \over 3}e^{i\phi_b}\rangle
\end{equation}
also represents a state describing the Ellsberg situation. We thus denote the set of all states describing the Ellsberg situation (\emph{Ellsberg state set}) by
\begin{equation}
\Sigma_{Ells}=\{p_{v}:  \ |v\rangle=  |{1 \over \sqrt{3}}e^{i\theta_r},\rho_ye^{i\theta_y},\rho_be^{i\theta_b}\rangle \ \vert \ 0\le\rho_y,\rho_b,\ \rho_y^2+\rho_b^2={2 \over 3}\}
\end{equation}
which is associated with a subset (not necessarily a linear subspace) of $\mathbb{C}^3$. If a state belongs to $\Sigma_{Ells}$, this state delivers a quantum description of the Ellsberg situation, together with the context $e$ represented by the spectral family $\{P_r, P_{yb}\}$ in $\mathbb{C}^3$.

\subsection{Modeling acts and utility\label{step2}}
In the second step of our construction, we describe the different acts $f_1$, $f_2$, $f_3$ and $f_4$. Here a second measurement context $g$ is introduced. The context $g$ describes the ball taken out of the urn and its color verified, red, yellow or black. Also $g$ is represented by a spectral family of orthogonal projection operators $\{P_r, P_y, P_b\}$, where $P_r$ is already defined, while $P_y$ is the orthogonal projection operator on $|0,1,0\rangle$ and $P_b$ is the orthogonal projection operator on $|0, 0, 1\rangle$. Thus, the probabilities $\mu_r(g,p_{v})$, $\mu_y(g,p_{v})$ and $\mu_b(g,p_{v})$ of drawing a red ball, a yellow ball and a black ball, respectively, in a state $p_{v}$ represented by the unit vector $|v\rangle=|\rho_re^{i\theta_r},\rho_ye^{i\theta_y},\rho_be^{i\theta_b}\rangle$ are
\begin{eqnarray}
\mu_r(g,p_{v})=\parallel P_r|v\rangle\parallel^{2}=\langle v|P_{r}|v\rangle=\rho_r^2 \\
\mu_y(g,p_{v})=\parallel P_y| v\rangle\parallel^{2}=\langle v|P_{y}|v\rangle=\rho_y^2 \\
\mu_b(g,p_{v})=\parallel P_b|v\rangle\parallel^{2}=\langle v|P_{b}|v\rangle =\rho_b^2
\end{eqnarray}

The acts $f_1$, $f_2$, $f_3$ and $f_4$ are observables in our modeling, hence they are represented by self-adjoint operators, built on the spectral decomposition $\{P_r, P_y, P_b\}$. More specifically, we have
\begin{eqnarray}
{\mathcal F}_1&=&12\$P_r \\
{\mathcal F}_2&=&12\$P_b \\
{\mathcal F}_3&=&12\$P_r+12\$P_y \\
{\mathcal F}_4&=&12\$P_y+12\$P_b=12\$P_{yb}
\end{eqnarray}
Let us now analyze the expected payoffs and the utility connected with the different acts. For the sake of simplicity, we identify here the utility with the expected payoff, which implies that we are considering {\it risk neutral agents}. 
Consider an arbitrary state $p_{v} \in \Sigma_{Ells}$ and the acts $f_1$ and $f_4$. We have
\begin{eqnarray}
U(f_1,g,p_{v})&=&\langle p_{v}|{\mathcal F}_1|p_{v} \rangle
=12\$\cdot {1 \over 3}=4\$ \\
U(f_4,g,p_{v})&=&\langle p_{v}|{\mathcal F}_4|p_{v} \rangle
=12\$\cdot {2 \over 3}=8\$
\end{eqnarray} 
which shows that both these utilities are completely {\it independent} of the considered state of $\Sigma_{Ells}$, i.e. they are {\it ambiguity free}. Consider now the acts $f_2$ and $f_3$, and again an arbitrary state $p_{v} \in \Sigma_{Ells}$. We have
\begin{eqnarray}
U(f_2,g,p_{v})&=&\langle p_{v}|{\mathcal F}_2|p_{v} \rangle
=12\$\mu_b(g,p_{v}) \\
U(f_3,g,p_{v})&=&\langle p_{v}|{\mathcal F}_3|p_{v} \rangle
=12\$(\mu_r(g,p_{v})+\mu_y(g,p_{v}))
\end{eqnarray}
which shows that both utilities strongly depend on the state $p_{v}$, due to the ambiguity the two acts are confronted with. This can be significantly revealed by considering two extreme cases. Let $p_{v_{ry}}$ and $p_{v_{rb}}$ be the states represented by the vectors $|v_{ry}\rangle$ and $|v_{rb}\rangle$ in Eqs. (\ref{stateredyellow}) and (\ref{stateredblack}), respectively. These states give rise for the act $f_2$ to utilities
\begin{eqnarray}
U(f_2,g,p_{v_{ry}})&=&12\$\mu_b(g,p_{v_{ry}})=12\$\cdot0=0\$ \\
U(f_2,g,p_{v_{rb}})&=&12\$\mu_b(g,p_{v_{rb}})=12\$\cdot {2 \over 3}=8\$.
\end{eqnarray}
This shows that a state $p_{v_{rb}}$ exists within the realm of ambiguity, where the utility of act $f_2$ is greater than the utility of act $f_1$, and also a state $p_{v_{ry}}$ exists within the realm of ambiguity, where the utility of act $f_2$ is smaller than the utility of act $f_1$. If we look at act $f_3$, we find for the two considered extreme states the following utilities
\begin{eqnarray}
U(f_3,g,p_{v_{ry}})&=&12\$(\mu_r(g,p_{v_{ry}})+\mu_y(g,p_{v_{ry}}))=12\$({1 \over 3}+{2 \over 3})=12\$ \\
U(f_3,g,p_{v_{rb}})&=&12\$(\mu_r(g,p_{v_{rb}})+\mu_y(g,p_{v_{rb}}))=12\$({1 \over 3}+0)=4\$.
\end{eqnarray}
Analogously, namely the state $p_{v_{ry}}$ gives rise to a greater utility, while the state $p_{v_{rb}}$ gives rise to a smaller utility than the independent one obtained in act $f_4$. 

\subsection{Decision making and superposition states\label{step3}}
The third step of our modeling consists in taking directly into account the role played by ambiguity. We proceed as follows \cite{aerts2009}. 

We suppose that the two extreme states $p_{v_{ry}}$ and $p_{v_{rb}}$ in Sect. \ref{step1} are relevant in the mind of the person that is asked to bet. Hence, it is a superposition state of these two states that will guide the decision process during the bet. Let us construct a general superposition state $p_{v_s}$ of these two states. Hence the vector $|v_s\rangle$ representing $p_{v_s}$ can be written as 
\begin{equation}\label{superpositionEllsberg}
|v_s\rangle=ae^{i\alpha}|v_{rb}\rangle+be^{i\beta}|v_{ry}\rangle
\end{equation}
where $a$, $b$, $\alpha$ and $\beta$ are chosen in such a way that $\langle v_s|v_s\rangle=1$, which means that
\begin{equation}
1=a^2+b^2+{2ab \over 3}\cos(\beta-\alpha+\theta_r-\phi_r)
\end{equation}
or, equivalently,
\begin{equation}
\cos(\beta-\alpha+\theta_r-\phi_r)={3(1-a^2-b^2) \over 2ab}
\end{equation}
The amplitude of the state $p_{v_s}$ with the first basis vector $|1,0,0\rangle$ is given by
\begin{equation}
\langle 1,0,0|v_s\rangle={a \over \sqrt{3}}e^{i(\alpha+\phi_r)}+{b \over \sqrt{3}}e^{i(\beta+\theta_r)}
\end{equation}
 Therefore, the transition probability is
\begin{equation}
|\langle 1,0,0|v_s\rangle|^2={1 \over 3}(a^2+b^2+3-3a^2-3b^2)={1 \over 3}(3-2a^2-2b^2)=\mu_r(g,p_{v_s})
\end{equation}
as one can easily verify. Analogously, the amplitudes with the second and third basis vectors are given by
\begin{eqnarray}
\langle 0,1,0|v_s\rangle&=&ae^{i\alpha}\langle 0,1,0|v_{rb}\rangle+be^{i\beta}\langle 0,1,0|v_{ry}\rangle=\sqrt{{2 \over 3}}be^{i(\beta+\theta_y)} \\
\langle 0,0,1|v_s\rangle&=&ae^{i\alpha}\langle 0,0,1|v_{rb}\rangle+be^{i\beta}\langle 0,0,1|v_{ry}\rangle=\sqrt{{2 \over 3}}ae^{i(\alpha+\theta_b)}
\end{eqnarray}
respectively.
Therefore, the transition probabilities are
\begin{eqnarray}
|\langle 0,1,0|v_s\rangle|^2&=&{2 \over 3}b^2=\mu_y(g,p_{v_s}) \\
|\langle 0,0,1|v_s\rangle|^2&=&{2 \over 3}a^2=\mu_b(g,p_{v_s})
\end{eqnarray}
From the foregoing follows that a general superposition state is represented by the unit vector
\begin{equation}
|v_s\rangle={1 \over \sqrt{3}}|ae^{i(\alpha+\phi_r)}+be^{i(\beta+\theta_r)}, \sqrt{2}be^{i(\beta+\theta_y)}, \sqrt{2}ae^{i(\alpha+\theta_b)}\rangle
\end{equation}
and that the utilities corresponding to 
the different acts are given by
\begin{eqnarray}
U(f_1,g,p_{v_s})&=&\langle v_s|{\mathcal F}_1|v_s\rangle=
4\$(3-2a^2-2b^2) \\
U(f_2,g,p_{v_s})&=&\langle v_s|{\mathcal F}_2|v_s\rangle=
4\$\cdot2a^2 \\
U(f_3,g,p_{v_s})&=&\langle v_s|{\mathcal F}_3|v_s\rangle=
4\$\cdot(3-2a^2) \\
U(f_4,g,p_{v_s})&=&\langle v_s|{\mathcal F}_4|v_s\rangle=
4\$(2a^2+2b^2) 
\end{eqnarray}
We can see that it is not necessarily the case that $\mu_r(g,p_{v_s})={1 \over 3}$, which means that choices of $a$ and $b$ can be made such that the superposition state $p_{v_s}\notin \Sigma_{Ells}$. The reason is that $\Sigma_{Ells}$ is not a linearly closed subset of $\mathbb{C}^3$. 

Let us then consider some of the extreme possibilities of superpositions. 
We remind that the latter superpositions are not relevant for the modeling of the Ellsberg paradox as it was originally formulated, but they come into play if one wants to represent more general Ellsberg-type situations. For example, choose $a=b={\sqrt{3} \over 2}$. Then we have $\mu_y(g,p_{v_s})={2 \over 3}\cdot{3 \over 4}={1 \over 2}$, $\mu_b(g,p_{v_s})={2 \over 3}\cdot{3 \over 4}={1 \over 2}$, and $\mu_r(g,p_{v_s})=0$, and $\cos(\beta-\alpha+\theta_r-\phi_r)={3(1-a^2-b^2) \over 2ab}=-1$, hence $\beta=\pi+\alpha-\theta_r+\phi_r$. Thus, the state represented by
\begin{equation}
|v_s(yb)\rangle={\sqrt{3} \over 2}(e^{i\alpha}|v_{rb}\rangle+e^{i(\pi+\alpha-\theta_r+\phi_r)}|v_{ry}\rangle)
\end{equation}
gives rise to probability zero for a red ball to be drawn. Another extreme choice is, when we take $a=b=\sqrt{3 \over 8}$. Then we have $\mu_y(g,p_{v_s})={2 \over 3}\cdot{3 \over 8}={1 \over 4}$, $\mu_b(g,p_{v_s})={2 \over 3}\cdot{3 \over 8}={1 \over 4}$ and $\mu_r(g,p_{v_s})={1 \over 2}$, and $\cos(\beta-\alpha+\theta_r-\phi_r)=+1$, which means $\beta=\alpha-\theta_r+\phi_r$. Hence, the state
\begin{equation}
|v_s(r)\rangle=\sqrt{3 \over 8}(e^{i\alpha}|v_{rb}\rangle+e^{i(\alpha-\theta_r+\phi_r)}|v_{ry}\rangle)
\end{equation}
gives rise to probability ${1 \over 2}$ for a red ball to be drawn. These are extreme superposition states which are not compatible with the situation as formulated by Ellsberg, but they could be useful if suitable Ellsberg--type extensions are taken into account. 

To construct non--trivial superpositions that retain probability ${1 \over 3}$ for drawing a red ball, we require that
\begin{equation}
{1 \over 3}=\mu_r(g,p_{v_s})={1 \over 3}(3-2a^2-2b^2)
\end{equation}
or, equivalently,
\begin{equation}
a^2+b^2=1
\end{equation}
which implies that
$\cos(\beta-\alpha+\theta_r-\phi_r)=0$,
hence
$\beta={\pi \over 2}+\alpha-\theta_r+\phi_r$.

Let us construct now two examples of superposition states that conserve the ${1 \over 3}$ probability for drawing a red ball, and hence are conservative superpositions, and express ambiguity as is thought to be the case in the Ellsberg paradox situation. The first state refers to the comparison for a bet between $f_1$ and $f_2$. The ambiguity of not knowing the number of yellow and black balls in the urn, only their sum to be 60, as compared to knowing the number of red balls in the urn to be 30, gives rise to the thought that `eventually there are perhaps almost no black balls and hence an abundance of yellow balls'. Jointly, and in superposition, the thought also comes that `it is of course also possible that there are more black balls than yellow balls'. These two thoughts in superposition, are mathematically represented by a state $p_{v_s}$. The state $p_{v_s}$ will be closer to $p_{v_{ry}}$, the extreme state with no black balls, if the person is deeply ambiguity averse, while it will be closer to $p_{v_{rb}}$, the extreme state with no yellow balls, if the person is attracted by ambiguity. Hence, these two tendencies are expressed by the values of $a$ and $b$ in the superposition state. If we consider again the utilities, this time with $a^2+b^2=1$, we have
\begin{eqnarray}
U(f_1,g,p_{v_s})&=&4\$ \\
U(f_2,g,p_{v_s})&=&4\$\cdot2a^2 \\
U(f_3,g,p_{v_s})&=&4\$\cdot(3-2a^2) \\
U(f_4,g,p_{v_s})&=&8\$
\end{eqnarray}
So, for $a^2 < {1 \over 2}$, which exactly means that the superposition state $p_{v_s}$ is closer to the state $p_{v_{ry}}$ than to the state $p_{v_{rb}}$, we have that $U(f_2,g,p_{v_s}) < U(f_1,g,p_{v_s})$, and hence a person with strong ambiguity aversion in the situation of the first bet, will then prefer to bet on $f_1$ and not on $f_2$. Let us choose a concrete state for the bet between $f_1$ and $f_2$, and call it $p_{v_s^{12}}$, and denote its superposition state by $|v_s^{12}\rangle$. Hence, for $|v_s^{12}\rangle$ we take $a={1 \over 2}$ and $b={\sqrt{3} \over 2}$ and hence $a^2={1 \over 4}$ and $b^2={3 \over 4}$. For the angles we must have $\beta-\alpha+\theta_r-\phi_r={\pi \over 2}$, hence let us choose $\theta_r=\phi_r=0$, $\alpha=0$, and $\beta={\pi \over 2}$. This gives us
\begin{equation} \label{vs12}
|v_s^{12}\rangle={1 \over 2\sqrt{3}}|1+\sqrt{3}e^{i{\pi \over 2}}, \sqrt{2}\sqrt{3}e^{i{\pi \over 2}}, \sqrt{2}\rangle={1 \over 2\sqrt{3}}|1+i\sqrt{3}, i\sqrt{6}, \sqrt{2}\rangle
\end{equation}
On the other hand, for ${1 \over 2} < a^2$, which means that the superposition state is closer to the state $p_{v_{rb}}$ than to the state $p_{v_{ry}}$, we have that $U(f_3,g,p_{v_s}) < U(f_4,g,p_{v_s})$, and hence a person with strong ambiguity aversion in the situation of the second bet, will then prefer to bet on $f_4$ and not on $f_3$. Also for this case we construct an explicit state, let us call it $p_{v_s^{23}}$, and denote it by the vector $|v_s^{34}\rangle$. Hence, for $|v_s^{34}\rangle$ we take $a={\sqrt{3} \over 2}$ and $b={1 \over 2}$ and hence $a^2={3 \over 4}$ and $b^2={1 \over 4}$. For the angles we must have $\beta-\alpha+\theta_r-\phi_r={\pi \over 2}$, hence let us choose $\theta_r=\phi_r=0$, $\alpha=0$, and $\beta={\pi \over 2}$. This gives us
\begin{equation} \label{vs34}
|v_s^{34}\rangle={1 \over 2\sqrt{3}}|\sqrt{3}+e^{i{\pi \over 2}}, \sqrt{2}e^{i{\pi \over 2}}, \sqrt{2}\sqrt{3}\rangle={1 \over 2\sqrt{3}}|\sqrt{3}+i, i\sqrt{2}, \sqrt{6}\rangle
\end{equation}
The superposition states $p_{v_{s}}^{12}$ and $p_{v_{s}}^{34}$ representing the unit vectors $|v_s^{12}\rangle$ and $|v_s^{34}\rangle$, respectively, will be used in the next sections to provide a faithful modeling of a concrete experiment in which decisions are expressed by real agents.

\section{An experiment testing the Ellsberg paradox\label{experiment}} 	
We have observed in the previous sections that genuine quantum aspects intervene in the description of the Ellsberg paradox. This will be even more evident from the analysis of a {\it statistically relevant} experiment we performed of this paradox, whose results were firstly reported in \cite{aertsdhooghesozzo2011}. To perform the experiment we sent out the following text to several people, consisting of a mixture of friends, relatives and students, to avoid as much as possible a statistical 
selection bias. 
	
\emph{We are conducting a small-scale statistics investigation into a particular problem and would like to invite you to participate as test subjects. Please note that it is not the aim for this problem to be resolved	in terms of correct or incorrect answers. It is your preference for a particular choice we want to test. The	
 question concerns the following situation.	Imagine an urn containing 90 balls of three different colors: red balls, black balls and yellow balls. We	  know that the number of red balls is 30 and that the sum of the the black balls and the yellow balls is 60.	 The questions of our investigation are about the situation where somebody randomly takes one ball from the urn.}	

(i) \emph{The first question is about a choice to be made between two bets: bet $f_1$ and bet $f_2$. Bet $f_1$ involves winning `10 euros when the ball is red' and `zero euros when it is black or yellow'. Bet $f_2$ involves winning `10 euros when the ball is black' and `zero euros when it is red or yellow'. The 
question we would ask	you to answer is: Which of the two bets, bet $f_1$ or bet $f_2$, would you prefer?}
	
(ii) \emph{The second question is again about a choice between two different bets, bet $f_3$ and bet $f_4$. Bet $f_3$	 involves winning `10 euros when the ball is red or yellow' and `zero euros when the ball is black'. Bet $f_4$	involves winning `10 euros when the ball is black or yellow' and `zero euros when the ball is red'. The	
 second question therefore is: Which of the two bets, bet $f_3$ or bet $f_4$, would you prefer?}
	
\emph{Please provide in your reply message the following information.}

\emph{For question 1, your preference (your choice between bet $f_1$ and bet $f_2$). For question 2, your preference (your choice between bet $f_3$ and bet $f_4$).	
By `preference' we mean `the bet you would take if this situation happened to you in real life'. You are	
expected to choose one of the bets for each of the questions, i.e. `not choosing is no option'. You are welcome to provide a brief explanation of your preferences, which may be of a purely intuitive	nature, only mentioning feelings, for example, but this is not required. It is all right if you only state your preferences without giving any explanation.}
	
\emph{One final remark about the colors. Your choices should not be affected by any personal color preference. If you feel that the colors of the example somehow have an 
influence on your choices, you should restate the	problem and take colors that are indifferent to yours, if this does not work, 
use other neutral characteristics	
to distinguish the balls.}	

Let us now analyze the obtained results.
	
We had 59 respondents participating in our test of the Ellsberg paradox problem, which is the typical number of participants in experiments on psychological effect of the type studied by 
Kahneman and Tversky, such as the 
conjunction fallacy, and the disjunction effect. (see, e.g., \cite{tversky1982,tversky1992}). We do believe that ambiguity aversion is a psychological effect within this calls of effects, which means that in case our hypothesis on the nature of ambiguity aversion is correct, our test is significant. This being said, it would certainly be interesting to make a similar test with a larger number of participants, which is something we plan for the future. We however also remark that the quantum modeling scheme is general enough to also model statistical data that are different from the ones collected in this specific test. Next to this remark, we want to point out that in the present paper we want to prove that these real data `can' be modeled in our approach. 

The answers of the participants were distributed as follows:  (a) 34 subjects preferred	bets $f_1$ and $f_4$; (b) 12 subjects preferred bets $f_2$ and $f_3$; (c) 7 subjects preferred bets $f_2$ and $f_4$; (d) 6 subjects preferred bets $f_1$ and $f_3$. This makes the weights with preference of bet $f_1$ over bet $f_2$ to be 0.68 against 0.32, and the weights with preference of bet $f_4$ over bet $f_3$ to be 0.69 against 0.31. It is worth to note that 34+12=46 people	chose the combination of bet $f_1$ and bet $f_4$ or bet  $f_2$ and bet $f_3$, which is 
78\%. In Sect. \ref{mod_experiment} we apply our quantum mechanical model to these experimental data. 

\section{Quantum modeling the experiment\label{mod_experiment}}
As anticipated in Sect. \ref{step3}, we take into account the superposition states $p_{v_{s}}^{12}$ and $p_{v_{s}}^{34}$ to put forward a description of the choices made by the participants in the test described in Sect. \ref{experiment}. We employ in this section the spectral methods that are typically used in quantum mechanics to construct self--adjoint operators.

First we consider the choice to bet on $f_1$ or on $f_2$. This is a choice with two possible outcomes, let us call them $o_1$ and $o_2$. Thus, we introduce two projection operators $P_1$ and $P_2$ on the Hilbert space ${\mathbb C}^{3}$, and represent the observable associated with the first bet by the self--adjoint operator (spectral decomposition) ${\mathcal O}_{12}=o_1 P_1+o_2P_2$. Then, we consider the choice to bet on $f_3$ or on $f_4$. This is a choice with two possible outcomes too, let us call them $o_3$ and $o_4$. Thus, we introduce two projection operators $P_3$ and $P_4$ on ${\mathbb C}^{3}$, and represent the observable associated with the second bet by the self--adjoint operator (spectral decomposition) ${\mathcal O}_{34}=o_3 P_3+o_4P_4$.

To model the empirical data in Sect. \ref{experiment}, we recall that we tested in our experiment 59 participants, and 40 preferred $f_1$ over $f_2$, while the remaining 19 preferred $f_2$ over $f_1$. This means that we should construct $P_1$ and $P_2$ in such a way that
\begin{equation}
\langle v_s^{12}|P_1|v_s^{12}\rangle={40 \over 59}=0.68 \quad  \langle v_s^{12}|P_2|v_s^{12}\rangle={19 \over 59}=0.32
\end{equation} 
In our experiment of the 59 participants there were 41 preferring $f_4$ over $f_3$, and 18 who choose the other way around. Hence, we should have
\begin{equation}
\langle v_s^{34}|P_3|v_s^{34}\rangle={18 \over 59}=0.31 \quad  \langle v_s^{34}|P_4|v_s^{34}\rangle={41 \over 59}=0.69
\end{equation} 
Both bets should give rise to no preference, hence probabilities ${1 \over 2}$ in all cases, when there is no ambiguity, when the state is $p_{v_c}$ represented by the vector
\begin{equation} \label{nobias}
|v_c\rangle= | {1 \over \sqrt{3}},{1 \over \sqrt{3}},{1 \over \sqrt{3}} \rangle
\end{equation}
Let us preliminarily denote by ${\mathcal S}_{1}^{12}$, ${\mathcal S}_{2}^{12}$,  ${\mathcal S}_{3}^{34}$ and ${\mathcal S}_{4}^{34}$ the eigenspaces associated with the eigenvalues $o_1$, $o_2$, $o_3$ and $o_4$, respectively. We can assume that ${\mathcal S}_{2}^{12}$ and ${\mathcal S}_{4}^{34}$ are one dimensional, and ${\mathcal S}_{1}^{12}$ and ${\mathcal S}_{3}^{34}$ are two dimensional, without loss of generality. Then, we prove the following two theorems.
\begin{theorem} \label{th1}
If, under the hypothesis on the eigenspaces formulated above, the two self--adjoint operators (spectral decompositions) ${\mathcal O}_{12}=o_1 P_1+o_2P_2$ and ${\mathcal O}_{34}=o_3 P_3+o_4P_4$ are such that $[{\mathcal O}_{12}, {\mathcal O}_{34}]=0$ in ${\mathbb C}^{3}$, then
two situations are possible, (i) an orthonormal basis $\{ |e_1\rangle, |e_2\rangle, |e_3\rangle \}$ of common eigenvectors exists such that $P_1=|e_1\rangle\langle e_1|+|e_2\rangle\langle e_2|$, $P_2=|e_3\rangle\langle e_3|$, $P_3=|e_2\rangle\langle e_2|+|e_3\rangle\langle e_3|$ and $P_4=|e_1\rangle\langle e_1|$, or (ii) ${\mathcal O}_{12}={\mathcal O}_{34}$, and hence ${\mathcal S}_{2}^{12}={\mathcal S}_{4}^{34}$ and ${\mathcal S}_{1}^{12}={\mathcal S}_{3}^{34}$.
\end{theorem}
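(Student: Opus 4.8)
The plan is to reduce the whole statement to the simultaneous diagonalizability of commuting self-adjoint operators followed by elementary dimension counting on the pairwise intersections of eigenspaces. Since $[\mathcal{O}_{12},\mathcal{O}_{34}]=0$ on the finite-dimensional space $\mathbb{C}^3$, each eigenspace of $\mathcal{O}_{12}$ is invariant under $\mathcal{O}_{34}$ and conversely. First I would use this to write $\mathbb{C}^3$ as the orthogonal direct sum of the four common eigenspaces $\mathcal{S}_i^{12}\cap\mathcal{S}_j^{34}$ with $(i,j)\in\{(1,3),(1,4),(2,3),(2,4)\}$, and set $d_{ij}=\dim(\mathcal{S}_i^{12}\cap\mathcal{S}_j^{34})$. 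Splitting each given eigenspace along the other operator's spectral decomposition yields the linear system $d_{13}+d_{14}=2$, $d_{23}+d_{24}=1$, $d_{13}+d_{23}=2$ and $d_{14}+d_{24}=1$, using the stated dimensions (two for $\mathcal{S}_1^{12}$ and $\mathcal{S}_3^{34}$, one for $\mathcal{S}_2^{12}$ and $\mathcal{S}_4^{34}$).

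Next I would solve this system over the nonnegative integers. The equation $d_{14}+d_{24}=1$ forces exactly two possibilities. Either $(d_{14},d_{24})=(1,0)$, which back-substitutes uniquely to $d_{13}=d_{23}=1$ (call this Case A), or $(d_{14},d_{24})=(0,1)$, which gives $d_{13}=2$ and $d_{23}=0$ (Case B). I would then check that no other nonnegative integer solution exists, so these two cases are genuinely exhaustive and correspond to the two alternatives in the statement.

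In Case A each of the three nonvanishing common eigenspaces is one-dimensional, so I would pick a unit vector $|e_1\rangle$ spanning $\mathcal{S}_1^{12}\cap\mathcal{S}_4^{34}$, $|e_2\rangle$ spanning $\mathcal{S}_1^{12}\cap\mathcal{S}_3^{34}$, and $|e_3\rangle$ spanning $\mathcal{S}_2^{12}\cap\mathcal{S}_3^{34}$. These are pairwise orthogonal because eigenvectors of a self-adjoint operator belonging to distinct eigenvalues are orthogonal, so $\{|e_1\rangle,|e_2\rangle,|e_3\rangle\}$ is an orthonormal basis of common eigenvectors. Reading off membership then gives $\mathcal{S}_1^{12}=\mathrm{span}(|e_1\rangle,|e_2\rangle)$, $\mathcal{S}_2^{12}=\mathrm{span}(|e_3\rangle)$, $\mathcal{S}_3^{34}=\mathrm{span}(|e_2\rangle,|e_3\rangle)$ and $\mathcal{S}_4^{34}=\mathrm{span}(|e_1\rangle)$, which is precisely the projector formula $P_1=|e_1\rangle\langle e_1|+|e_2\rangle\langle e_2|$, $P_2=|e_3\rangle\langle e_3|$, $P_3=|e_2\rangle\langle e_2|+|e_3\rangle\langle e_3|$, $P_4=|e_1\rangle\langle e_1|$ of situation (i). In Case B, $d_{13}=2$ means $\mathcal{S}_1^{12}\cap\mathcal{S}_3^{34}$ is two-dimensional; since $\mathcal{S}_1^{12}$ and $\mathcal{S}_3^{34}$ are themselves two-dimensional this forces $\mathcal{S}_1^{12}=\mathcal{S}_3^{34}$, and $d_{24}=1$ likewise forces $\mathcal{S}_2^{12}=\mathcal{S}_4^{34}$. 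Hence the two spectral families coincide, so $P_1=P_3$ and $P_2=P_4$ and thus $\mathcal{O}_{12}=\mathcal{O}_{34}$, which is situation (ii).

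I expect the only genuinely delicate point to be the opening move: cleanly justifying that commutativity delivers the orthogonal decomposition of $\mathbb{C}^3$ into the common eigenspaces $\mathcal{S}_i^{12}\cap\mathcal{S}_j^{34}$, and verifying that the resulting integer system is exhausted by the two cases with no hidden nonnegativity violation. Once the dimension bookkeeping is pinned down, the identification of Case A with (i) and Case B with (ii) is immediate, and the orthonormality of the chosen basis is automatic from the self-adjointness of $\mathcal{O}_{12}$ and $\mathcal{O}_{34}$.
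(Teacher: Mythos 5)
Your proof is correct, and it takes a genuinely different route from the paper's. The paper never invokes full simultaneous diagonalizability: it observes that $[{\mathcal O}_{12},{\mathcal O}_{34}]=0$ implies $[P_2,P_4]=0$ for the two rank-one spectral projections, so the one-dimensional eigenspaces ${\mathcal S}_2^{12}$ and ${\mathcal S}_4^{34}$ are either equal (giving case (ii)) or orthogonal; in the orthogonal case it obtains the third basis vector from the purely dimensional fact that two planes in ${\mathbb C}^3$ intersect in at least a line (no commutativity needed there), with orthogonality of that vector to the other two coming from the orthogonality of distinct eigenspaces of each self-adjoint operator. You instead decompose ${\mathbb C}^3$ into the four common eigenspaces ${\mathcal S}_i^{12}\cap{\mathcal S}_j^{34}$ and reduce the dichotomy to the nonnegative-integer system for the $d_{ij}$, where $d_{14}+d_{24}=1$ forces exactly your Cases A and B; your Case A/B split is the same dichotomy as the paper's ($d_{24}=0$ versus $d_{24}=1$), but reached by systematic bookkeeping rather than by the ad hoc geometric argument. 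Your route is more mechanical and generalizes immediately to higher dimensions and finer spectral decompositions; the paper's is shorter here because it localizes the use of commutativity to the single commutator $[P_2,P_4]$ and gets the rest from dimension counting. Your flagged ``delicate point'' is indeed the only thing to nail down, and it is standard: each ${\mathcal S}_i^{12}$ is invariant under ${\mathcal O}_{34}$, whose restriction to it is self-adjoint, so ${\mathcal S}_i^{12}=\bigoplus_j ({\mathcal S}_i^{12}\cap{\mathcal S}_j^{34})$. One shared imprecision, inherited from the theorem statement and present in the paper's own proof as well: in case (ii), coincidence of the eigenspaces only gives equality of the spectral families $\{P_1,P_2\}=\{P_3,P_4\}$; the operator identity ${\mathcal O}_{12}={\mathcal O}_{34}$ additionally needs $o_1=o_3$ and $o_2=o_4$, which the paper effectively adopts later by setting $o_1=o_3=1$, $o_2=o_4=-1$.
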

\begin{proof}
Suppose that ${\mathcal O}_{12}\not={\mathcal O}_{34}$. Since ${\mathcal S}_{2}^{12}$ and ${\mathcal S}_{4}^{34}$ are one dimensional, we can choose $|e_1\rangle$ and $|e_3\rangle$ unit vectors respectively in ${\mathcal S}_{2}^{12}$ and ${\mathcal S}_{4}^{34}$. Since $[{\mathcal O}_{12}, {\mathcal O}_{34}]=0$ it follows that $[P_2, P_4]=0$, and hence $|e_1\rangle \perp |e_3\rangle$ or $|e_1\rangle=\lambda|e_2\rangle$ for some $\lambda \in {\mathbb C}$. But, if $|e_1\rangle=\lambda|e_2\rangle$, we have ${\mathcal S}_{2}^{12}={\mathcal S}_{4}^{34}$, and hence ${\mathcal S}_{1}^{12}=({\mathcal S}_{2}^{12})^\perp=({\mathcal S}_{4}^{34})^\perp={\mathcal S}_{3}^{34}$, which entails ${\mathcal O}_{12}={\mathcal O}_{34}$. Hence, we have $|e_1\rangle \perp |e_3\rangle$. Since ${\mathcal S}_{1}^{12}$ and ${\mathcal S}_{3}^{34}$ are both two dimensional, their intersection is one dimensional, or they are equal. If they are equal, then also their orthogocomplements are equal, and this would entail again that ${\mathcal O}_{12}={\mathcal O}_{34}$. Hence, we have that their intersection is one dimensional. We choose $|e_2\rangle$ a unit vector contained in this intersection, and hence $|e_2\rangle \perp |e_1\rangle$ and $|e_2\rangle \perp |e_3\rangle$. This means that $\{|e_1\rangle, |e_2\rangle, |e_3\rangle\}$ is an orthonormal basis, and $P_1=|e_1\rangle\langle e_1|+|e_2\rangle\langle e_2|$, $P_2=|e_3\rangle\langle e_3|$, $P_3=|e_2\rangle\langle e_2|+|e_3\rangle\langle e_3|$ and $P_4=|e_1\rangle\langle e_1|$.
\qed
\end{proof}

\begin{theorem} \label{th2}
For the data of our experiment exist compatible self-adjoint operators, hence, following the notations introduced, ${\mathcal O}_{12}\not={\mathcal O}_{34}$ such that $[{\mathcal O}_{12}, {\mathcal O}_{34}]=0$, modeling both bets. This means that, again following the notations introduced, that we have
\begin{eqnarray}
&\langle e_1|e_1\rangle=\langle e_2|e_2\rangle=\langle e_3|e_3\rangle=1 \label{start}\\
&\langle e_1|e_2\rangle=\langle e_1|e_3\rangle=\langle e_2|e_3\rangle=0 \\
&|\langle v_{s}^{12}|e_1\rangle|^{2}+|\langle v_{s}^{12}|e_2\rangle|^{2}=0.68 \label{1}\\
&|\langle v_{s}^{12}|e_3\rangle|^{2}=0.32 \label{2} \\
&|\langle v_{s}^{34}|e_2\rangle|^{2}+|\langle v_{s}^{34}|e_3\rangle|^{2}=0.31 \\ \label{e1}
&|\langle v_{s}^{34}|e_1\rangle|^{2}=0.69 \\
&|\langle v_{c}|e_1\rangle|^{2}+|\langle v_{c}|e_2\rangle|^{2}=0.5=|\langle v_{c}|e_3\rangle|^{2} \label{4}\\ \label{vc}
&|\langle v_{c}|e_2\rangle|^{2}+|\langle v_{c}|e_3\rangle|^{2}=0.5=|\langle v_{c}|e_1\rangle|^{2} \label{6}
\end{eqnarray}
\end{theorem}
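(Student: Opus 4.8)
The plan is to read Theorem~\ref{th1} backwards: exhibiting a single orthonormal basis $\{|e_1\rangle,|e_2\rangle,|e_3\rangle\}$ of ${\mathbb C}^3$ automatically produces a compatible pair of spectral decompositions through $P_1=|e_1\rangle\langle e_1|+|e_2\rangle\langle e_2|$, $P_2=|e_3\rangle\langle e_3|$, $P_3=|e_2\rangle\langle e_2|+|e_3\rangle\langle e_3|$, $P_4=|e_1\rangle\langle e_1|$, with ${\mathcal O}_{12}\neq{\mathcal O}_{34}$ guaranteed because $P_2$ and $P_4$ are then distinct rank-one projectors. So the whole theorem reduces to finding one basis realising the prescribed transition probabilities, the orthonormality block (\ref{start}) being nothing but the statement that the $|e_i\rangle$ are a basis. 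First I would strip the redundancy in the remaining equations using completeness, $\sum_i|\langle\psi|e_i\rangle|^2=1$. Applied to $|v_c\rangle$, conditions (\ref{4})--(\ref{6}) are equivalent to $|\langle v_c|e_1\rangle|^2=|\langle v_c|e_3\rangle|^2={1\over2}$ together with $\langle v_c|e_2\rangle=0$, i.e. $|e_2\rangle$ is forced into the plane $|v_c\rangle^\perp$; applied to $|v_s^{12}\rangle$ and $|v_s^{34}\rangle$, the paired conditions collapse to the single moduli $|\langle v_s^{12}|e_3\rangle|^2=0.32$ and $|\langle v_s^{34}|e_1\rangle|^2=0.69$.

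I would then build the frame in stages. Since $|e_2\rangle$ must lie in the two-dimensional subspace $|v_c\rangle^\perp$, I parametrize its direction there (the phase being irrelevant to every modulus) and take $|e_1\rangle,|e_3\rangle$ as an orthonormal pair spanning the complementary plane $|e_2\rangle^\perp$, which therefore contains $|v_c\rangle$. Inside that plane $|e_1\rangle$ carries one angle $\gamma$ and one relative phase $\delta$, with $|e_3\rangle$ fixed up to phase; writing $|v_c\rangle=\cos\gamma\,|e_1\rangle+e^{i\delta}\sin\gamma\,|e_3\rangle$ shows that the requirement $|\langle v_c|e_1\rangle|^2={1\over2}$ simply pins $\gamma={\pi\over4}$, i.e. $|e_1\rangle$ and $|e_3\rangle$ sit symmetrically about $|v_c\rangle$. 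This leaves the phase $\delta$ and the direction of $|e_2\rangle$ as handles for the last two targets $|\langle v_s^{34}|e_1\rangle|^2=0.69$ and $|\langle v_s^{12}|e_3\rangle|^2=0.32$, which I would make concrete by substituting the explicit coordinates of $|v_c\rangle={1\over\sqrt3}(1,1,1)$ and of $|v_s^{12}\rangle,|v_s^{34}\rangle$ from (\ref{vs12})--(\ref{vs34}) to express both moduli as trigonometric functions of these parameters.

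A degree-of-freedom count reassures that the system is not overdetermined: an orthonormal frame modulo the three irrelevant vector phases has six real parameters, against two real constraints from $\langle v_c|e_2\rangle=0$ and one each from the three modulus values, i.e. five constraints, leaving a one-parameter family. The hard part is precisely that these three modulus equations are coupled, because the one chosen direction of $|e_2\rangle$ enters all three overlaps at once, so they cannot be solved independently. I would settle this by continuity: after fixing $\gamma={\pi\over4}$, I show that as $\delta$ and the direction of $|e_2\rangle$ range over their domains the pair $\bigl(|\langle v_s^{34}|e_1\rangle|^2,|\langle v_s^{12}|e_3\rangle|^2\bigr)$ sweeps a region of $[0,1]^2$ containing $(0.69,0.32)$, so an intermediate-value argument---or a direct solution of the trigonometric system---yields an admissible frame. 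Finally I would record one explicit such basis and verify by substitution that all of (\ref{start})--(\ref{6}) hold, completing the existence proof.
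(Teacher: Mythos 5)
Your strategy is sound and takes a genuinely different route from the paper. The paper proves the theorem by bare exhibition: it prints two numerical vectors $|e_1\rangle$ and $|e_3\rangle$ in polar form, explicitly states that the construction is omitted ``for the sake of brevity,'' and notes that $|e_2\rangle$ is then determined up to phase as the unit vector orthogonal to both; all the listed equations are left to be checked by substitution. You instead derive the structure of any solution before looking for one: the completeness reduction (the two conditions involving $|e_2\rangle$ are redundant once $|\langle v_s^{12}|e_3\rangle|^{2}=0.32$ and $|\langle v_s^{34}|e_1\rangle|^{2}=0.69$ hold, and the $|v_c\rangle$ conditions are equivalent to $\langle v_c|e_2\rangle=0$ together with $|\langle v_c|e_1\rangle|^{2}=|\langle v_c|e_3\rangle|^{2}=\tfrac12$), the observation that $|e_1\rangle,|e_3\rangle$ must sit symmetrically about $|v_c\rangle$ in the plane $|e_2\rangle^{\perp}$ (your $\gamma=\pi/4$), and the count of six frame parameters against five real constraints. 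All of this is correct, it is exactly the analysis the paper skips, and it explains both why solutions should form a one-parameter family and why ${\mathcal O}_{12}\neq{\mathcal O}_{34}$ comes for free (the rank-one projectors $P_2=|e_3\rangle\langle e_3|$ and $P_4=|e_1\rangle\langle e_1|$ are orthogonal, hence distinct). What your approach buys is transparency and a reusable recipe for arbitrary data; what the paper's buys is, in principle, immediate verifiability of a concrete solution.

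The caveat is that, as written, your argument stops one step short of a proof. The theorem is an existence claim, and its entire nontrivial content is that the coupled system actually attains the pair $(0.69,\,0.32)$; your continuity/intermediate-value step is announced but not performed --- no parameter values are exhibited whose images bracket the target, and no explicit basis is recorded and substituted back. Until one of those is done, you have only reduced the theorem to an unverified claim about the image of a continuous map. To be fair, the paper has the mirror-image defect: it reports a claimed solution with no derivation and no verification, and its printed vectors appear to contain misprints (the moduli of its $|e_3\rangle$ satisfy $0.25^{2}+0.55^{2}+0.90^{2}\approx 1.18\neq 1$, so as printed it is not even a unit vector). So either carry out the sweep argument quantitatively or solve your trigonometric system numerically and record the resulting frame; with that final step your proof would be complete, and more informative than the one in the paper.
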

\begin{proof}
We can explicitly construct an orthonormal basis $\{|e_1\rangle, |e_2\rangle, |e_3\rangle \}$ which simultaneously satisfies Eqs. (\ref{start})--(\ref{6}). We omit the 
explicit construction, for the sake of brevity, and we only report the solution, as follows. The orthonormal vectors $|e_1\rangle$ and $|e_3\rangle$ are respectively given by
\begin{eqnarray}
&&|e_1\rangle=|0.38e^{i61.2^{\circ}},0.13e^{i248.4^{\circ}}, 0.92e^{i194.4^{\circ}}\rangle \\
&&|e_3\rangle=|0.25e^{i251.27^{\circ}},0.55e^{i246.85^{\circ}}, 0.90e^{i218.83^{\circ}}\rangle
\end{eqnarray}
One can then construct at once a unit vector $|e_2\rangle$ orthogonal to both $|e_1\rangle$ and $|e_3\rangle$, 
which we don't do explicitly, again for the sake of brevity.
\qed
\end{proof}
Theorems \ref{th1} and \ref{th2} entail 
that within our quantum modeling approach the two bets can be represented by commuting observables such that the statistical data of the real experiment are faithfully modeled.

Then, the following orthogonal projection operators model the agents' decisions.
\begin{equation}
P_2=|e_3\rangle\langle e_3|=\left( \begin{array}{ccc}
0.06 & 0.14e^{i4.42^\circ} & 0.23e^{i32.44^\circ} \\
0.14e^{-i4.42^\circ} & 0.30 & 0.49e^{i28.02^\circ} \\
0.23e^{-i32.44^\circ} & 0.49e^{-i28.02^\circ} & 0.81 \end{array} \right)
\end{equation}
and
\begin{equation}
P_4=|e_1\rangle\langle e_1|=\left( \begin{array}{ccc}
0.14 & 0.05e^{-i187.2^\circ} & 0.35e^{-i133.2^\circ} \\
0.05e^{i187.2^\circ} & 0.02 & 0.12e^{i54^\circ} \\
0.35e^{i133.2^\circ} & 0.12e^{-i54^\circ} & 0.85 \end{array} \right)
\end{equation}
Thus, $P_1=\mathbbmss{1}-P_2$ and $P_3=\mathbbmss{1}-P_4$ can be easily calculated. 

Let us now come to the representation of the observables. The observable associated with the preference between $f_1$ and $f_2$ is then represented by the self--adjoint operator (spectral decomposition) ${\mathcal O}_{12}$, while the observable associated with the preference between $f_3$ and $f_4$ is represented by the self--adjoint operator (spectral decomposition) ${\mathcal O}_{34}$. More explicitly, if we set $o_1=o_3=1$ and $o_2=o_4=-1$, we get the following explicit representations.
\begin{eqnarray}
{\mathcal O}_{12}&=&P_1-P_2=\mathbbmss{1}-2P_2 \nonumber  \\
&=&\left( \begin{array}{ccc}
0.87 & -0.28e^{i4.42^\circ} & -0.46e^{i32.44^\circ} \\
-0.28e^{-i4.42^\circ} & 0.40 & -0.98e^{i28.02^\circ} \\
-0.45e^{-i32.44^\circ} & -0.98e^{-i28.02^\circ} & -0.62 \end{array} \right)
\end{eqnarray}
\begin{eqnarray}
{\mathcal O}_{34}&=&P_3-P_4=\mathbbmss{1}-2P_4 \nonumber  \\
&=&\left( \begin{array}{ccc}
0.71 & -0.10e^{-i187.2^\circ} & -0.70e^{-i133.2^\circ} \\
-0.10e^{i187.2^\circ} & 0.97 & -0.24e^{i54^\circ} \\
-0.70e^{i133.2^\circ} & -0.24e^{-i54^\circ} & -0.69 \end{array} \right)
\end{eqnarray}
A direct calculation of the commutator operator $[{\mathcal O}_{12},{\mathcal O}_{34}]$ reveals that the corresponding observables are indeed
compatible.

\section{A real vector space analysis\label{realcase}}
We show in this section that the 
possibility of representing the experimental data in Sect. \ref{experiment} by compatible measurements for the bets 
relies crucially 
on our choice of a Hilbert space over complex numbers as a modeling space. Indeed, if a Hilbert space over real numbers is
attempted, no compatible the observables for the bets and the data in Sect. \ref{mod_experiment} can be constructed any longer, 
as our analysis in this section reveals.

Let us 
indeed attempt to represent the Ellsberg paradox situation in the real Hilbert space ${\mathbb R}^{3}$. This comes to allowing only values of $0$ and $\pi$ for the phases of the complex vectors in Sect. \ref{step3}. It is then easy to see that the only conservative superpositions that remain are the extreme states themselves, that is,
\begin{eqnarray}
|v_s^{12}\rangle=|v_{ry}\rangle=|\pm{1 \over \sqrt{3}}, \pm\sqrt{2 \over 3}, 0\rangle \label{vs12real}\\
|v_s^{34}\rangle=|v_{rb}\rangle=|\pm{1 \over \sqrt{3}}, 0, \pm\sqrt{2 \over 3}\rangle \label{vs34real}
\end{eqnarray}
This means that superposition states such as $|v_s^{12}\rangle$ and $|v_s^{34}\rangle$ are only conservative, in case complex numbers are used for the superposition. This is a first instance of the necessity of complex numbers for a quantum modeling of the Ellsberg situation, because indeed, we should be able to represent the priors, hence the quantum states, by superpositions, of the extreme states, and not by the extreme states themselves. But, let us prove that even if we opt for representing the quantum states by the extreme states, that no real Hilbert space representation with compatible observables modeling the bets and are experimental data is possible.

Taking into account the content of Theorems \ref{th1}, and making use of the notations introduced, we can state the following: For a compatible solution to exist, we need to find two unit vectors, let us denote them $|x\rangle$ and $|y\rangle$, such that they are elements of the one dimensional eigenspaces $|x\rangle\in{\mathcal S}_{2}^{12}$ and $|y\rangle\in{\mathcal S}_{4}^{34}$ respectively. For case (i) of Theorems \ref{th1} to be satisfied, this is the case where the self-adjoint operators representing the compatible measurements are different, we have that $|x\rangle$ needs to be orthogonal to $|y\rangle$, which we can express as $\langle x|y\rangle=0$. For the case (ii) of Theorems \ref{th1} to be satisfied, this is the case where the self-adjoint operators representing the compatible measurements are equal, we have that $|x\rangle$ needs to be a multiple of $|y\rangle$, and since both are unit vectors, and we work in a real Hilbert space, this means that $|x\rangle=\pm|y\rangle$. This can be expressed as $\langle x|y\rangle=\pm1$. In the following we will prove that such $|x\rangle$ and $|y\rangle$ do not exist in a real Hilbert space.

In our proof we start by supposing that these vectors exist and find a contradiction. Let us put $|x\rangle=|a,b,c\rangle$ and $|y\rangle=|d,f,g\rangle$. Since $|x\rangle$ and $|y\rangle$ are unit vectors, we have
$1=a^2+b^2+c^2=d^2+f^2+g^2$.
Generalizing Eqs. (\ref{2}), (\ref{e1}), (\ref{4}) and (\ref{vc}) to considering the two cases (i) and (ii) of Theorems \ref{th1} we must have the following equations satisfied for the vectors $|x\rangle$ and $|y\rangle$.
\begin{eqnarray}
{1 \over 2}&=&|\langle {1 \over \sqrt{3}}, {1 \over \sqrt{3}},{1 \over \sqrt{3}}|a,b,c\rangle|^2 \nonumber\\
&=&{1 \over 3}(a+b+c)^2={1 \over 3}(a^2+b^2+c^2+2ab+2ac+2bc) \\
0.69&=&|\langle {1 \over \sqrt{3}}, 0,\sqrt{2 \over 3}|a,b,c\rangle|^2  \nonumber\\
&=&{1 \over 3}(a+\sqrt{2}c)^2={1 \over 3}(a^2+2c^2+2\sqrt{2}ac) \label{equ01} \\
{1 \over 2}&=&|\langle {1 \over \sqrt{3}}, {1 \over \sqrt{3}},{1 \over \sqrt{3}}|d,f,g\rangle|^2 \nonumber\\
&=&{1 \over 3}(d+f+g)^2={1 \over 3}(d^2+f^2+g^2+2df+2dg+2fg) \\
0.32&=&|\langle {1 \over \sqrt{3}}, \sqrt{2 \over 3},0|d,f,g\rangle|^2 \nonumber\\
&=&{1 \over 3}(d+\sqrt{2}f)^2={1 \over 3}(d^2+2f^2+2\sqrt{2}df)
\end{eqnarray}
We stress that we have considered here the $++$ signs choices for $|v_s^{12}\rangle$ and $|v_s^{34}\rangle$. We will later consider the other possibilities. 
Elaborating we get the following set of equations to be satisfied
\begin{eqnarray} \label{unite1}
1&=&a^2+b^2+c^2 \\ \label{conee101}
1.5&=&a^2+b^2+c^2+2ab+2ac+2bc \\ \label{conee102}
2.07&=&a^2+2c^2+2\sqrt{2}ac \label{equ02} \\ \label{unite3}
1&=&d^2+f^2+g^2 \\ \label{conee301}
1.5&=&d^2+f^2+g^2+2df+2dg+2fg \\ \label{conee302}
0.96&=&d^2+2f^2+2\sqrt{2}df
\end{eqnarray}
The points $(a,b,c)$ satisfying Eq. (\ref{conee101}) lie on a cone in the origin and centred around $|v_c\rangle$, and the points $(a,b,c)$ satisfying Eq. (\ref{conee102}) lie on a cone in the origin and centred around $|v_s^{34}\rangle$. Hence Eqs. (\ref{conee101}) and (\ref{conee102}) can only jointly be satisfied where these two cones intersect.  Further need $(a,b,c)$ to be the coordinates of a unit vector, which is expressed by Eq. (\ref{unite1}). For two cones there are in a three dimensional real space only two possibilities, or they cut each other in two lines through the origin, or they do not have an intersection different from the origin. On two lines through the origin, four unit vectors can be found always. This means that Eqs. (\ref{unite1}), (\ref{conee101}) and (\ref{conee102}) have four solutions, or none. We are in a situation here of four solutions, which are the following
\begin{eqnarray}
&&1 \quad a=0.052 \quad b=0.192 \quad c=0.980 \\
&&2 \quad a=-0.931 \quad b=0.065 \quad c=-0.359 \\
&&3 \quad a=-0.052 \quad b=-0.192 \quad c=-0.980 \\ 
&&4 \quad a=0.931 \quad b=-0.065 \quad c=0.359 
\end{eqnarray}
Also the solutions of Eqs. (\ref{unite3}), (\ref{conee301}) and (\ref{conee302}) are four points on the two intersecting lines of a cone, or none, if the cones do not intersect. In the situation corresponding to our experimental data, we also here find four solutions, which are
\begin{eqnarray}
&&1 \quad d=0.969 \quad f=0.008 \quad g=0.248 \\
&&2 \quad d=0.154 \quad f=-0.802 \quad g=-0.577 \\
&&3 \quad d=-0.969 \quad f=-0.008 \quad g=-0.248 \\ 
&&4 \quad d=-0.154 \quad f=0.802 \quad g=0.577 
\end{eqnarray}  
Our proof follows now easily, when we verify that none of these solutions represent allows $|x\rangle$ and $|y\rangle$ to be an orthogonal pair of vectors, or a pair of vectors equal to each other, or to ones opposite. We can verify this by calculating the number $\langle x|y\rangle$, and seeing that they are all different from 0, different from +1, and different from -1.
We indeed have
\begin{eqnarray}
&&\langle e_1|e_3\rangle^{11}=0.296 \quad \langle e_1|e_3\rangle^{12}=-0.711 \\
&& \langle e_1|e_3\rangle^{13}=-0.296 \quad \langle e_1|e_4\rangle^{13}=0.711\\
&&\langle e_1|e_3\rangle^{21}=-0.990 \quad \langle e_1|e_3\rangle^{22}=0.011 \\
&&\langle e_1|e_3\rangle^{23}=0.990 \quad \langle e_1|e_3\rangle^{24}=-0.011 \\
&&\langle e_1|e_3\rangle^{31}=-0.296 \quad \langle e_1|e_3\rangle^{32}=0.711 \\
&& \langle e_1|e_3\rangle^{33}=0.296 \quad \langle e_1|e_4\rangle^{33}=-0.711 \\
&&\langle e_1|e_3\rangle^{41}=0.990 \quad \langle e_1|e_3\rangle^{42}=-0.011 \\
&&\langle e_1|e_3\rangle^{43}=-0.990 \quad \langle e_1|e_3\rangle^{44}=0.011
\end{eqnarray}
To complete our proof of the non-existence of compatible observables in a real Hilbert space, we need to analyse also all the other possibilities, i.e. all possible choices of $+$
and $-$
for $|v_s^{12}\rangle$ and $|v_s^{34}\rangle$. This can be done completely along the same lines of the above, and hence we do not represent it explicitly here. We have however verified all cases carefully, and indeed, none of the possibilities lead to 
vectors 
$|x\rangle$ and $|y\rangle$ such that their in product $\langle x|y\rangle$ equals 0, +1, or -1.
This completes our proof of the impossibility to model our experimental data for the considered bets, such that these bets are represented by commuting self-adjoint operators, hence compatible measurements, in a real Hilbert space.

The above result is relevant, in our opinion, and we think it is worth to discuss it more in detail. The existence of compatible observables to represent the decision-makers' choice among the different acts in our experiment on the Ellsberg paradox is a direct consequence of the fact that we used a complex Hilbert space as a modeling space. As we have proved in this section, if one instead uses a real vector space, then the collected experimental data cannot be reproduced by compatible observables. Hence, one has two possibilities, in this case. Either one requires that compatible observables exist that accord with an Ellsberg-type situation, and then one has to accept a complex Hilbert space representation where ambiguity aversion is coded into superposed quantum states,
and these superpositions are of the `complex type', hence entailing genuine interference -- when superpositions are with complex (non real) coefficient, this means that the quantum effect of interference is present. Alternatively, one can use a representation in a real vector space but, then, one should accept that an Ellserg-type situation cannot be reproduced by compatible observables. In either case, the 
appearance of quantum structures -- interference due to the presence of genuine complex numbers, or incompatibility due to the impossibility to represent the data by compatible measurements -- seems unavoidable in the Ellsberg paradox situation.  

Our quantum theoretic modeling of the Ellsberg paradox situation is thus completed. We however want to add some explanatory remarks concerning the novelty of our approach, as follows.

(i) We have incorporated the subjective preference of traditional economics approaches in the quantum state describing the conceptual Ellsberg entity. At variance with existing proposals, the subjective preference coded in the quantum state can be different for each one of the acts $f_j$, since it is not derived from any prefixed mathematical rule. Therefore, we can naturally explain a situation with $f_1$ preferred to $f_2$ and $f_4$ preferred to $f_3$, without the need of assuming extra hypotheses.

(ii) In the present paper, we have detected genuine quantum aspects in the description of the Ellsberg paradox situation, namely, contextuality, superposition, and the ensuing interference. Further deepening and experimenting most probably will reveal other quantum aspects. Hence the hypothesis that quantum effects, of a conceptual nature, concretely drive decision--makers' behavior in uncertainty situations is warranted.  

(iii) We have focused here on the Ellsberg paradox. But, our quantum theoretic modeling is sufficiently general to cope with various generalizations of the Ellsberg paradox, which are problematical in traditional economics approaches, such as the Machina paradox \cite{ast12}. This opens the way toward the construction of a unified framework extending standard expected utility and modeling `ambiguity laden' situations in economics and decision theory.

\end{document}